\numberwithin{equation}{section}  
\DeclarePairedDelimiter{\braces}{\{}{\}}
\DeclarePairedDelimiter{\bracks}{[}{]}
\DeclarePairedDelimiter{\parens}{(}{)}
\DeclarePairedDelimiter{\abs}{\lvert}{\rvert}
\theoremstyle{plain}
\newtheorem{theorem}{Theorem}
\newtheorem{corollary}[theorem]{Corollary}
\newtheorem{lemma}[theorem]{Lemma}
\theoremstyle{definition}
\newtheorem{definition}[theorem]{Definition}
\theoremstyle{remark}
\newtheorem{remark}[theorem]{Remark}
\newtheorem{example}[theorem]{Example}
\numberwithin{theorem}{section}
\newcommand{\debug}[1]{#1}
\newcommand{\ie}{i.e., }
\newcommand{\eg}{e.g., }
\newcommand{\iid}{i.i.d.\ }
\newcommand{\naturals}{\mathbb{\debug N}}
\DeclareMathOperator{\Expect}{\debug{\mathsf{E}}}
\DeclareMathOperator{\Prob}{\debug{\mathsf{P}}}
\newcommand{\alice}{\mathsf{\debug A}}
\newcommand{\arrivr}{\debug \lambda}
\newcommand{\bob}{\mathsf{\debug B}}
\newcommand{\cost}{\debug c}
\newcommand{\crit}{\mathsf{\debug c}}
\newcommand{\ctime}{\debug t}
\newcommand{\differ}{\debug D}
\newcommand{\differtilde}{\widetilde{\differ}}
\newcommand{\dirac}{\debug \delta}
\newcommand{\labcust}{\debug i}
\newcommand{\labcustalt}{\debug j}
\newcommand{\labcuststar}{\debug{\labcust^{\ast}}}
\newcommand{\labels}{\debug I}
\newcommand{\labserv}{\debug \kappa}
\newcommand{\maxnjobs}{\debug N}
\newcommand{\MMS}[1][\nserv]{M/M/$#1$\,}
\newcommand{\ngparameters}{\parens*{\bar{\arrivr},\parens*{\bar{\servr}_{\labserv}}_{\labserv\in[\nserv]},\bar{\cost},\bar{\reward}}}
\newcommand{\njobs}{\debug n}
\newcommand{\njobsalt}{\debug m}
\newcommand{\njobsA}{\njobs_{\alice}}
\newcommand{\njobsB}{\njobs_{\bob}}
\newcommand{\nopt}{\njobs^{\ast}}
\newcommand{\nserv}{\debug s}
\newcommand{\parameters}{\parens*{\arrivr,\parens*{\servr_{\labserv}}_{\labserv\in[\nserv]},\cost,\reward}}
\newcommand{\parametersone}{\parens*{\arrivr,\servr,\cost,\reward}}
\newcommand{\posa}{\debug \pi}
\newcommand{\posit}{\debug p}
\newcommand{\qlthresh}{\debug Y}
\newcommand{\regime}{\debug R}
\newcommand{\return}{\mathsf{\debug r}}
\newcommand{\reward}{\debug r}
\newcommand{\round}{\debug k}
\newcommand{\ruinp}{\debug \theta}
\newcommand{\ruinptilde}{\widetilde{\ruinp}}
\newcommand{\ruint}{\debug T}
\newcommand{\ruinttilde}{\widetilde{\ruint}}
\newcommand{\score}{\debug \beta}
\newcommand{\servr}{\debug \mu}
\newcommand{\setA}{\debug A}
\newcommand{\simplex}{\debug \Delta}
\newcommand{\slot}{\debug z}
\newcommand{\states}{\mathcal{\debug X}}
\newcommand{\statex}{\debug x}
\newcommand{\stime}{\debug \tau}
\newcommand{\stimetilde}{\widetilde{\stime}}
\newcommand{\strat}{\debug \sigma}
\newcommand{\stratopt}{\debug{\strat^{\ast}}}
\newcommand{\transa}{\debug \alpha}
\newcommand{\transr}{\debug \rho}
\newcommand{\transs}{\debug \xi}
\newcommand{\acdef}[1]{\acfi{#1}\acused{#1}}
\newacro{FCFS}{first-come-first-served}
\newacro{FCLS}{first-come-last-served} 
\newacro{LCFS}{last-come-first-served}
\newacro{LCFSPR}[LCFS-PR]{last-come-first-served with preemption}
\newacro{PS}{priority-slots}
\newacro{SIRO}{service in random order}
\begin{document}

\title{A Characterization of Universally Optimal Queueing Regimes}

\author[1]{Marco Scarsini}
\affil[1]{Department of Economics and Financial Markets, Luiss University\\
\texttt{marco.scarsini@luiss.it}
}

\author[2]{Eran Shmaya}
\affil[2]{Department of Economics, Stony Brook University\\
\texttt{eran.shmaya@stonybrook.edu}
}

\maketitle

\begin{abstract}
We consider an \MMS queueing model in which customers strategically decide, based on the service reward and waiting cost, whether to join upon arrival or balk and, at any time, whether to remain in the queue or renege. 
Rational strategic behavior yields an equilibrium whose outcome may be socially efficient or inefficient, depending on the queueing regime. 
Some regimes yield an efficient equilibrium only under precise calibration to the model parameters. 
Others are universally optimal, meaning that their equilibrium outcome is efficient for all parameter values. 
Universal optimality is therefore an appealing property for a planner choosing a queueing regime. 
We characterize the class of universally optimal queueing regimes. 
A by-product of our characterization is that preemption plays an unavoidable role in universally optimal regimes.

\bigskip\noindent  
\emph{Keywords.} 
Queueing regimes, first-come-first-served, last-come-first-served, social optimum, externalities.

\medskip\noindent
\emph{MSC2020 Classification}: Primary 91A07. Secondary: 60K25.
\end{abstract}



%
%

\section{Introduction}
\label{se:intro}

Queues are often designed rather than given: a platform, clinic, or service provider chooses an admission rule (who is allowed to join), an ordering rule (who is served next), and possibly an expulsion rule (who is asked to leave). 
When customers can strategically decide whether to join and whether to abandon, these design choices shape both congestion and incentives.
A foundational example is the  observable \MMS[1] queueing model of \citet{Nao:E1969}, where customers choose whether to enter the queue or balk. 
Under the standard \ac{FCFS} regime, equilibrium entry is typically excessive relative to the social optimum because each entrant imposes a waiting-time externality on those who arrive later. 

Classic remedies—entry tolls or a cap on the queue length—can implement the social optimum, but they require fine calibration to the primitives (arrival and service rates, reward from service, and waiting costs). 
A striking alternative is to change the queueing regime instead of the price. 
\citet{Has:E1985} showed that, in the same \MMS[1] environment, \ac{LCFSPR} yields an equilibrium that is socially optimal without parameter tuning. 
The economic intuition is simple: under \ac{LCFSPR}, the customer who is last internalizes the marginal congestion cost.
In fact, this last customer's decision to stay does not delay anyone else: future arrivals jump ahead, and current customers are already ahead. 
More broadly, Hassin’s insight extends to a sufficient structural condition: if the regime never assigns a newly arriving customer the last position, then equilibrium behavior implements the social optimum, uniformly across parameters. 

More recent work uncovered additional robustly efficient regimes that do not fit the simple never-last rule. 
In particular, the priority-slots regime \citep{Wan:MS2016,HavOz:ORL2016} assigns each joining customer to the best available slot, and serves the best occupied slot at each time; this regime is also universally optimal. 
In priority-slots, however, when slots $1,\dots,\njobs$ are occupied, the next arriving customer takes slot $\njobs+1$, which is (by design) the last position. 

This raises a natural design question:
which queueing regimes are universally optimal? 
That is, which regimes implement the socially optimal outcome in equilibrium for every set of primitives, without requiring the designer to know or estimate those primitives?
This paper answers that question for an observable \MMS system with heterogeneous servers. 
There are two main contributions.

First, we introduce an abstract definition of a queueing regime that encompasses standard disciplines (\ac{FCFS}, \ac{LCFS} variants, random order) as well as richer mechanisms such as priority-slots. 

Second, we provide a complete characterization of universal optimality. 
The starting point is Hassin’s sufficient condition---never place an arrival last---which is not necessary in general: for instance, priority-slots is universally optimal despite sometimes placing arrivals last. 

Our main result shows that the correct refinement is the following:
A newly arriving customer may be placed in the last position only at special states that cannot be reached from a larger system. 
We formalize these special states as maximal states: informally, a state is maximal if it cannot be reached from a larger system by a sequence of arrivals and services that never serves the last customer. 

This definition captures, for example, why priority-slots can place arrivals last exactly when the occupied slots are $[\njobs]$: those are precisely the states at which the mechanism ``resets'' who bears the marginal congestion cost.
This characterization also yields an intuitive implementation perspective. The social optimum in these Markovian queueing environments takes a threshold form: admit and retain customers up to an optimal cutoff $\njobs^{*}$, and expel (or induce abandonment of) those beyond it. 
The designer, however, does not know $\njobs^{*}$ because it depends on unknown primitives. 
Our characterization implies that a universally optimal regime effectively delegates the marginal decision---whether the cutoff should be $\njobs$ or $\njobs+1$---to a customer who is placed last at the right decision points (the maximal states). 
Outside those decision points, universal optimality forces the mechanism to prevent arrivals from becoming last, preserving the key externality-free property behind Hassin’s insight. 

Finally, our characterization delivers two additional implications. 
First, it formalizes a common—but previously informal—intuition: any universally optimal regime must sometimes interrupt a customer's service (our notion of ``preemption'') in at least one non-idle state. 
Second, it provides a useful dominance property: under any regime, if it is socially optimal for a customer to stay, then staying is also rational. 
We close by illustrating our condition with new examples of universally optimal regimes. 

The notion of universal optimality is closely related to implementation under complete information: agents know the primitives, whereas  the designer commits to a regime that implements the social optimum across all primitives.



%
%

\subsection{Related Work}
\label{suse:literature-review}
For a survey of the literature on strategic queueing we refer the reader to  \citet{HasHav:Kluwer2003} and  \citet{Has:CRC2016}. 
Among other things, the literature deals with heterogeneous customers, general arrival and service processes, and waiting-cost functions. 
Not long after \citet{Nao:E1969}, several  papers considered variations on the topic.
For instance, 
\citet{Yec:OR1971,Yec:MS1972} generalized Naor's model to GI/M/1 and GI/M/$s$ queueing systems, respectively, and  
\citet{Knu:E1972} studied an \MMS model with a general cost/benefit function.
\citet{Sim:SSMH1976} studied a GI/M/$s$ queueing system and compared the equilibrium and optimal thresholds in the classical case and in the case where a toll collector maximizes the total revenue.
To achieve his results, he used the technique of the ``polite customer,'' \ie a customer who will be served only when no other customer is waiting, and will be preempted, if necessary.
A further generalization was provided by \citet{Sti:IEEETAC1985}, who considered a  model where the cost per unit time is a convex, nondecreasing function of the number of customers present in the queue. 

Whereas Naor's model assumed that customers observe the queue length, a large strand of the literature following  \citet{Lit:IJPR1974} and \citet{EdeHil:E1975} studied  settings in which customers do not observe the current state of the queue. 

\citet{Has:E1985} considered queueing regimes that, for any possible values of the relevant parameters, guarantee optimality in equilibrium. 
Hassin's claim was extended to an \MMS model by \citet{XuSha:OR1993}, using what they call the \emph{dual approach}:
They related optimal admission control in an \MMS queue under  \ac{FCFS} to the dual expulsion-control problem under \ac{LCFSPR}.
This is similar to \citet{Sim:SSMH1976}, where a present customer does not produce any externality on future customers.
\citet{Wan:MS2016} used the dual approach to propose a new queueing regime under which the equilibrium behavior of customers is socially optimal.  
This regime was then analyzed and simplified by \citet{HavOz:ORL2016}, who also proposed a new regime that guarantees social optimality of equilibria and does not require preemption. 
Unfortunately, this regime is not robust to changes of the model parameters. 
\citet{HavOz:QMSM2018} introduced an interesting criterion, called \emph{social cost of deviation}, and used it to give new proofs of some classical results, including the ones by  \citet{Nao:E1969,EdeHil:E1975}, and \citet{Has:E1985}.

Various papers compared the equilibrium outcome and the social optimum in strategic queues, under different assumptions on the stochastic model and on the cost and reward structure \citep[see, \eg][]{LipSti:OR1977,Sti:MS1978,BelSti:MS1983,GuoHas:OR2011,HasSni:OR2020,WanPraHanHas:SS2024,ZhoGopWar:OR2025}.
The optimal policy for an \MMS system with heterogeneous servers was studied by, among others,  \citet{KimAhnRig:JAP2011,Aga:IP2021}. 
\citet{AfeSar:SSRN2026,AfeHuIbrSar:SSRN2026} and \citet{DAnSca:arXiv2025} studied strategic queueing models where customers are split into two classes with different priorities.

\citet{HavOz:ORL2016} studied various protocols that guarantee optimality and showed that, to implement either \ac{LCFS} or priority-slots, the social designer does not need to know the parameters of the model.    
To our knowledge, existing papers study equilibrium behavior and social optimum under a given regime, but characterizations of universally optimal regimes do not exist.

\citet{SuZen:MSOM2004,SuZen:OR2005,ShiYin:SSRN2022} compared the efficiency of \ac{FCFS} and \ac{LCFS} regimes in the framework of organ transplants.
The inefficiency of the equilibrium in the Naor model was quantified by \citet{GilHasKer:IEEETAC2014} using the price of anarchy. 
\citet{GilHas:ORL2014} considered an \MMS[1] model where a fraction of the population cooperates with a central planner.
In the context of ridesharing platforms, \citet{CasMaNazYan:arXiv2021} considered a randomized variation of the \ac{FCFS} regime and showed how it improves efficiency.
\citet{FelSeg:MS2022} considered limits to the time that customers spend in the queue as a way to control congestion.
In a behavioral study, \citet{Bue:MS2021} examined customers' aversion to being placed at the end of a queue and analyzed its implications for abandonment.
\citet{AfeAkaBenDabDen:SSRN2025} studied a two-sided observable queue where strategic agents arrive on each side with the goal of matching.

Some recent papers in economics have dealt with issues related to strategic queues. 
In \citet{CriTho:TE2019} customers learn about the server's quality. 
In \citet{Mar:TE2025} customers learn about their own valuation from service. 
Perhaps the closest paper to ours is  \citet{CheTer:JPE2025}, who considered a design problem with a general class of queueing regimes. 
There are several major differences between their paper and ours:
First, in their paper the designer tailors the regime to the parameters, whereas we consider universally optimal regimes. 
Second, in their paper the designer's payoff is a combination of the customers' welfare and the provider's payoff, whereas in our paper the designer only cares about the customers' payoff. 
Third, in their paper the customers do not observe the state of the queue. 
Their optimal regime is \ac{FCFS} with a cap (which depends on the parameters); this is not universally optimal by our definition. 
The class of regimes they study is also different from ours, and we explain the difference later. 
\citet{DovSze:GEB2025} extended Naor's argument to a two-sided matching market. 
Their paper includes references to other recent papers that deal with welfare implications of dynamic matching.

%
%

\subsection{Notation}
\label{suse:notation}

The symbol $[\njobs]$ denotes the set $\braces*{1,\dots,\njobs}$.
Given a finite set $\setA$, the symbol $\simplex(\setA)$ denotes the set of probability measures of $\setA$.

%
%

\subsection{Organization of the paper}
\label{suse:organization}

The paper is organized as follows: 
\cref{se:environment} describes the main concepts of the model.
\cref{se:optimal-regime} states the main results about universally optimal regimes.
\cref{se:proof} is devoted to the proofs.
\cref{se:conclusions} offers some concluding remarks.
\cref{se:symbols} contains a list of symbols used throughout the paper. 

%
%

\section{Environment}
\label{se:environment}

We consider an \MMS queueing system, \ie a system  with $\nserv$ servers, where customers arrive according to a Poisson process with rate $\arrivr$, and, for $\labserv\in[\nserv]$, the service times for server $\labserv$ are \iid exponentially distributed random variables with rate $\servr_{\labserv}$,  and service times are independent across servers. 
For the sake of simplicity, without  loss of generality, we assume that $\servr_{\labserv} \ge \servr_{\labserv+1}$, for $\labserv\in\braces*{1,\dots,\nserv-1}$.
Each customer incurs a flow cost rate $\cost$ while in the system, and receives a reward $\reward$ upon service completion.

%
%

\subsection{Social Optimum}
\label{se:social-optimum}

The social designer cares about the (undiscounted) average welfare of the customers in the long run. 
Equivalently, the designer incurs a flow cost rate $\cost$ per customer in the system and receives a reward $\reward$ upon each customer's service completion. 
The designer does not care about the identity of the customers who are being served, and can decide which arriving customers to accept, which server will serve which customer, and when to kick existing customers out of the system.  

Because the environment is Markovian, the socially optimal policy prescribes an intervention only when a new customer arrives or when a customer completes a service. 
In an \MMS[1] context,  \citet{Nao:E1969} proved that, to achieve the social optimum, the designer should admit customers into the system as long as their number is at most a threshold $\nopt$, which depends on the parameters $\parametersone$, and remove customers from the system if the number of customers exceeds that threshold. 
From the social designer's point of view, it does not matter which customer is removed: it can be the newly arriving customer or a customer who was already  waiting in the queue. 
This analysis was extended to the \MMS case by \citet{XuSha:OR1993}.

%
%

\subsection{Queueing Regimes}
\label{se:anonymous-regime}

We now provide our definition of queueing regime. 
A \emph{queueing regime} $\regime$ is given by a tuple $\parens*{\states, \transa,\parens*{\transs_{\labserv}}_{\labserv\in[\nserv]},\parens*{\transr_{\labcust}}_{\labcust \ge 1},\posa}$, where $\states$ is a countable set of \emph{states},  $\transa,\transs_{\labserv},\transr_{\labcust}$ are  \emph{transition functions}, and $\posa$ is a \emph{position function}.    
The set of states can be partitioned as $\states=\states_{0}\uplus \states_{1}\uplus\dots$, where $\states_{\njobs}$ is the set of possible states with $\njobs$ customers  in the system, and $\uplus$ is the disjoint union. 
We also assume that $\states_{0}$ is a singleton, representing the  \emph{idle} system. 
For $\statex\in\states_{\njobs}$ we  define  $\njobs(\statex)=\njobs$. 
Throughout the paper, we assume that, at any given time, the state of the system can be observed by all customers.

At any point in time, the customers who are currently in the system are ranked according to some order. 
This order, which we refer to as \emph{rank}, is the order in which they will be served if no new customer joins and nobody reneges.

To satisfy optimality, we assume that, at any given time, the customer with rank $\labcust$ is served by server $\labcust$, for $\labcust \le \min(\njobs,\nserv)$. 
Thus, better servers are always the first to be active. 
If $\njobs>\nserv$, then customers ranked $\nserv+1,\nserv+2,\dots,\njobs$ are in the queue in this order.
This model can equivalently be represented as an \MMS[\infty] model with $\servr_{\labserv}=0$, for all $\labserv>\nserv$, and \emph{reassignment}.  
Reassignment means that, if a newly arriving customer is assigned rank $m\in[\njobs+1]$, then all customers with rank smaller than $m$ retain their rank, whereas the rank of all other customers increases by $1$.
Similarly, if the customer with rank $m$ is either served or reneges, then all customers with rank smaller than $m$ retain their rank, whereas  the rank of all other customers decreases by $1$ \citep[see, \eg][]{AkgDowRig:IEEETAC2014}.
When a customer's service is interrupted and the customer is sent back to the queue, we refer to this as  \emph{preemption}. 
This occurs when the customer in position $\nserv$ is reassigned to position $\nserv+1$. 
Under the above assumptions, if one server (any server) is available, no customer has an incentive to remain in the queue, rather than being served by this server. 
In a regime that never uses  reassignment, customers may prefer to stay in the queue and wait for a fast server, rather than being served by a slow server.

The system transitions from one state to another when either a new customer arrives, or a customer is served, or a group of customers (possibly only one) renege.
Arrivals and service completions  are random and controlled by Nature, whereas reneging is a decision made by the customer. 
We assume that none of these events changes the relative order among the existing customers in the system.

The transition rules of the system and the rank of new customers in the system are governed by the transition functions $\transr_{\labcust},\transs_{\labserv},\transa$, and the position function $\posa$ as follows: 
\begin{itemize}
\item 
If the system is in state $\statex\in \states_{\njobs}$ and a new customer arrives,  the system transitions to state  $\transa(\statex)\in\states_{\njobs+1}$ and the arriving customer is assigned a random rank, drawn from the distribution $\posa(\statex)\in \simplex([\njobs+1])$. 

\item If the system is in state $\statex\in \states_{\njobs}$ with $\njobs\ge 1$ and the customer who is being served by server $\labserv\in[\nserv]$ completes service, the system transitions to state $\transs_{\labserv}(\statex)\in\states_{\njobs-1}$.
\item
If the system is in state  $\statex\in \states_{\njobs}$ and the customer whose current position is $\labcust\in [\njobs]$ reneges, the system transitions to state $\transr_{\labcust}(\statex)\in\states_{\njobs-1}$.
\end{itemize}
For $\labcust<\labcustalt$, we assume that, if the customers who are currently in positions $\labcust$ and $\labcustalt$ renege, then the new state does not depend on the order in which they renege, \ie $\transr_{\labcust}(\transr_{\labcustalt}(\statex))=
\transr_{\labcustalt-1}(\transr_{\labcust}(\statex))$.
This allows us to talk unambiguously about simultaneous reneging: for $\labels=\parens*{\labcust_{1} < \labcust_{2} < \dots < \labcust_{\round}}$, we let $\transr_{\labels} \coloneqq \transr_{\labcust_{1}} \circ \transr_{\labcust_{2}} \circ \dots \circ \transr_{\labcust_{\round}}$.

\begin{remark}
\label{re:balk-renege}
Up to now, we made no distinction between balking and reneging. 
The term balking refers to an arriving customer who chooses not to enter the system at all. 
In our notation, this corresponds to a customer who reneges immediately upon arrival. 
In some situations, it is natural to add the assumption that the state of the system does not change if a customer balks (for example, because the system does not observe balking customers). 
We do not make this assumption in the paper, because our theorem holds without it.
\end{remark}

In a recent paper, \citet{CheTer:JPE2025} also considered a general definition of queueing regime. 
On the one hand, their definition is more permissive than ours, in that it allows for splitting the service rate between the customers who are currently in the queue. 
On the other hand, it is more restrictive than ours because it does not allow the positions of arriving customers in the queue to depend on a Markov state as our definition does. 
For example, the priority-slots  regime does not belong to their class of regimes. 

We now provide a few examples of regimes. 
In the sequel, when the new position is deterministic, we sometimes abuse notation and,  for $\labcust\in [\njobs+1]$, we write $\posa(\statex)=\labcust$ instead of  $\posa(\statex)=\dirac_{\labcust}$ (the Dirac distribution on $\labcust$).

\begin{example}[\Acl{FCFS}]
\label{ex:FCFS-s}

The  \acdef{FCFS} regime is defined as follows: The state only encodes the number of customers in the system; hence, (up to renaming) 
$\states_{\njobs}=\{\njobs\}$.
The transition and position functions are:
\begin{equation}\label{eq:transition-FCFS}
\begin{aligned}
\transa(\njobs)&=\njobs+1,\\
\transs_{\labserv}(\njobs)&=\njobs-1,\\
\transr_{\labcust}(\njobs)&=\njobs-1,\\
\posa(\njobs)&=\njobs+1.
\end{aligned}
\end{equation}
\end{example}

\begin{example}
[\Acl{LCFS}]
\label{ex:LCFS-s}

The  \acdef{LCFS} regime has the same state space and the same transition functions as  \ac{FCFS} in \eqref{eq:transition-FCFS}. 
In the \acdef{LCFSPR} regime  $\posa(\statex)=1$ for every state $\statex$.
In the  \ac{LCFS} \emph{without preemption} $\posa(\statex)=\min(\nserv+1,\njobs(\statex)+1)$ for every state $\statex$. 
\end{example}

\begin{example}
[\Acl{SIRO}]
\label{ex:random-order-s} 

The  \acdef{SIRO} regime has the same state space and the same transition functions as in the previous two examples. 
The position function $\posa(\njobs)$ assigns probability $1/(\njobs+1)$ to each element of $[\njobs+1]$, so that a newly arriving customer is assigned a random rank.
\end{example}

\begin{example}[\Acl{PS}]
\label{ex:priority-slots-s}

In the \acdef{PS} regime, there is a countable set  $\naturals$ of slots and the state space is given by the set of occupied slots, so an element of $\states_{\njobs}$ is a subset of $\naturals$ of cardinality $\njobs$. 
If $\statex=\{\slot_{1},\dots,\slot_{\njobs}\}\in\states_{\njobs}$ with $\slot_{1} < \dots < \slot_{\njobs}$, then 
\begin{equation}
\begin{aligned}
\posa(\statex)
&=\min (\naturals\setminus \statex),\\
\transa(\statex)
&=\statex\cup \{\posa(\statex)\},\\
\transs_{\labserv}(\statex)
&= \statex\setminus\{\slot_{\labserv}\}\\
\transr_{\labcust}(\statex)
&=\statex\setminus\{\slot_{\labcust}\}.
\end{aligned}
\end{equation}
\end{example}

%
%

\subsection{Strategies and Equilibrium}
\label{suse:strategies}

A \emph{Markov strategy profile} is a function $\strat$ defined over non-idle states such that $\strat(\statex)\subseteq[\njobs]$ for every $\statex\in \states_{\njobs}$, 
with the interpretation that $\strat(\statex)$ are the positions of players who renege in state $\statex$.
We assume that reneging occurs simultaneously whenever the system reaches this state. 

The social optimum is achieved by a strategy profile $\strat$ such that 
$\abs*{\strat(\statex)}=\max\parens*{\njobs-\nopt\parameters, 0}$ for every $\statex\in \states_{\njobs}$. 
This was proved by \citet{Nao:E1969}  for  \MMS[1] and by \citet{XuSha:OR1993} for \MMS queueing systems \citep[see also][]{AkgDowRig:IEEETAC2014,Aga:IP2021}.
The strategy profile $\strat$ is said to induce the socially optimal behavior.

A Markov strategy profile is a \emph{Markov perfect equilibrium} if, for every state $\statex$, the profile is a Nash equilibrium in the game that starts in state $\statex$, in which players can decide whether to stay in the system or to renege. 

Note that the game may have equilibria with individual strategies that are not deterministic and do not renege immediately after a transition to a new state (for example, there can be a war of attrition among players' decision to renege). 
However, these equilibria, if they exist, cannot induce a socially optimal behavior.

%
%

\section{Universally Optimal Regimes}
\label{se:optimal-regime}

A regime is \emph{universally optimal} if, for every parameter vector  $\parameters$, the game admits a Markov perfect equilibrium that induces the socially optimal behavior. 
Our goal is to characterize the class of universally optimal regimes.
To do so, we need the following definition.

\begin{definition}
\label{de:maximal-state}
A state $\statex\in\states_{\njobs}$ is \emph{maximal} if there does not exist a sequence $\statex_{0},\statex_{1},\dots,\statex_{\round}=\statex$ of states such that $\njobs(\statex_{0}) > \njobs$, and for every $1\le \labcustalt \le \round$,
either $\statex_{\labcustalt}=\transa(\statex_{\labcustalt-1})$ or 
$\statex_{\labcustalt}=\transs_{\labserv}(\statex_{\labcustalt-1})$ for some $\labserv < \njobs(\statex_{\labcustalt-1})$. 
\end{definition}
In words, a state is maximal if it cannot be reached from a state with a larger number of customers by a finite sequence of arrivals and services of customers who are not the last in the system (in particular, an idle state is always maximal).
 
\begin{example}[\Acl{FCFS}, \Acl{LCFS}, \Acl{SIRO}]
\label{ex:FCFS-2}
In these regimes the state is the number of customers in the system. 
The only maximal state is the idle state.
Indeed, for $\njobs>0$, it is possible to have   $\njobs$ customers in the system now and to have had $\njobs+1$ customers in the past.
For $\njobs=0$, this is possible only if the last customer is served.
\end{example}

\begin{example}[Priority-slots]
\label{ex:priority-slots-2}
Recall that a state is given by the set of occupied slots. 
We claim that a state $\statex\in\states_{\njobs}$ is maximal if and only if $\statex=[\njobs]$, that is, the slots that are occupied are exactly $1,\dots,\njobs$. 

Indeed, to show that $\statex=[\njobs]$ is maximal, assume we reached it from state $\statex_{0}$ with $\njobsalt\ge\njobs$ customers without reneging and without serving the last customer in the system.
Let $\slot\ge \njobsalt$ be the worst slot that was ever occupied during the transition from $\statex_{0}$ to $\statex$. 
By the way the  mechanism is defined,  slot $\slot$ must still be occupied in state $\statex$ because the customer at that slot is always the last in the system.
Therefore, $\slot\le \njobs$, which implies $\njobsalt=\njobs$. 
This proves that $\statex$ is maximal. 

Assume now that $\statex \in \states_\njobs$ and $\statex\neq[\njobs]$ for any $\njobs$. 
Then there exists some $\njobsalt>\njobs$ such that $\statex = [\njobsalt]\setminus\{\slot_{1},\dots,\slot_{\round}\}$ with $\slot_{1}>\dots>\slot_{\round}$ and $\round=\njobsalt-\njobs$. 
Consider the sequence of states that starts with $\statex_{0}=[\njobsalt]$,  followed by $\slot_{1}$ services of the first customer in the system, then $\slot_{1}-1$ arrivals, then $\slot_{2}$ services  of the first customer in the system followed by $\slot_{2}-1$ arrivals, ending with $\slot_{\round}$ services  of the first customer in the system followed by $\slot_{\round}-1$ arrivals. 
By the way the mechanism is defined, this sequence ends in state $\statex$ without serving the customer at slot $\njobsalt$, who is always the last in the system.
Because $\njobsalt>\njobs$, this proves that $\statex$ is not maximal.
\end{example}

The following theorem states our characterization.

\begin{theorem}
\label{th:thetheorem}
The following two conditions are equivalent for a queueing regime:
\begin{enumerate}[label=\emph{(\alph*)}, ref=(\alph*)]
\item 
\label{it:th:thetheorem-a}
The  regime is universally optimal.

\item 
\label{it:th:thetheorem-b}
For every state $\statex$ that is not maximal, 
\begin{equation}
\label{eq:Hassin-condition}   
\text{$\posa(\statex)$ assigns probability $0$ to $\njobs(\statex)+1$}.
\end{equation}
\end{enumerate}
\end{theorem}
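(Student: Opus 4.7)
My plan is to prove both directions of the equivalence by analyzing, under the candidate Markov profile $\strat^{\nopt}$ that prescribes simultaneous reneging of the trailing $\njobs(\statex)-\nopt$ customers whenever $\njobs(\statex)>\nopt(\parameters)$ and no reneging otherwise, the continuation value $V(\statex,\labcust)$ of the customer at position $\labcust$ in state $\statex$; here $\nopt=\nopt(\parameters)$ is Naor's socially optimal threshold.

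For the sufficiency direction \ref{it:th:thetheorem-b}$\Rightarrow$\ref{it:th:thetheorem-a}, the plan is to verify that $\strat^{\nopt}$ is a Markov perfect equilibrium. Condition~\eqref{eq:Hassin-condition} governs the evolution of the back-of-queue customer: on an arrival to a non-maximal state $\statex$ with $\njobs(\statex)=\nopt$ we have $\posa(\statex)\le\nopt$, so the new arrival slips into an interior position and pushes the former back customer into position $\nopt+1$, where $\strat^{\nopt}$ tells them to renege; on an arrival to a maximal state, the new arrival is itself placed at position $\nopt+1$ and reneges. In either case the state returns instantly to one with $\nopt$ customers. I would show, by coupling the arrival and service epochs of the abstract regime with those of \ac{LCFSPR}, that whoever occupies the back position faces exactly the Bellman recursion solved by Hassin, yielding $V=0$ at position $\nopt+1$ and non-negative values at all positions below. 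A monotone-comparison argument, based on the observation that under~\eqref{eq:Hassin-condition} arrivals push an interior customer backward no more often than under \ac{LCFSPR}, then delivers $V(\statex,\labcust)\ge 0$ for every $\labcust\le\nopt$, so every prescription of $\strat^{\nopt}$ is individually rational.

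For the converse \ref{it:th:thetheorem-a}$\Rightarrow$\ref{it:th:thetheorem-b}, the plan is contrapositive. Suppose there is a non-maximal $\statex\in\states_{\njobs}$ with $\posa(\statex)=\njobs+1$, witnessed by a sequence $\statex_{0},\dots,\statex_{k}=\statex$ of arrivals and services whose queue length exceeds $\njobs$ at some intermediate step. I would choose parameters such that $\nopt=\njobs$ and the individually rational threshold $\nequil$ strictly exceeds $\nopt+1$; this is possible by tuning $\cost/\reward$ since Naor's $\nopt$ is always strictly smaller than $\nequil$. In any Markov profile that induces socially optimal behavior, at $\transa(\statex)\in\states_{\nopt+1}$ exactly one customer at some position $\labcust^{\ast}$ reneges. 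I would argue by dichotomy: if $\labcust^{\ast}<\nopt+1$, the interior customer has strictly positive continuation value (being closer to service than the back customer whose value is already non-negative), so staying is their strict preference, contradicting equilibrium; if $\labcust^{\ast}=\nopt+1$, the non-maximality of $\statex$ allows one to extract, from the transition structure along the witnessing sequence, a lower bound on the deviator's continuation value that matches the \ac{FCFS}-like extreme $\reward-\cost(\nopt+1)/\servr>0$, again contradicting equilibrium.

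The main obstacle is the coupling with \ac{LCFSPR} in the sufficiency direction and the extraction of a strictly positive lower bound in the necessity direction. In the sufficiency part, condition~\eqref{eq:Hassin-condition} must be used to ensure that the back customer's Markov chain mirrors Hassin's and that interior customers enjoy weak dominance; in the necessity part, the non-maximality of $\statex$ must be combined with the choice of parameters to guarantee that, no matter which position is designated to renege, some player strictly prefers to deviate. Careful bookkeeping of how positions evolve under arrivals and induced simultaneous reneges, together with the quantitative comparison to the reference regimes \ac{LCFSPR} and \ac{FCFS}, forms the technical heart of the argument.
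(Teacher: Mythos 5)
Your sufficiency sketch is in the right spirit and overlaps with the paper's approach (the paper's Lemma~\ref{le:remain} and Lemma~\ref{le:stay-dominant} give exactly the ``interior customers are fine'' statement, proved not by coupling to LCFS-PR but by the elementary observation that the customer's position is dominated by the Naor chain $\qlength_\round$ of \eqref{eq:Alice-qlength}). However, the crucial half of sufficiency---that the customer at position $\nopt+1$ earns at most $0$ by staying---is not established by your ``the state returns instantly to $\nopt$ customers'' remark. You assert $V=0$ at $\nopt+1$ ``by Hassin's Bellman recursion,'' but that conclusion is regime-specific. The actual work is to show that, under condition~\ref{it:th:thetheorem-b}, the deviating back customer remains \emph{last} at every subsequent round and therefore is served only if the block of rounds hits the idle state, which yields the bound $D_{\nopt+1}\le 0$. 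You use~\ref{it:th:thetheorem-b} only for who reneges, not for Kevin's dynamics, and this is precisely where the condition has teeth: without it, arriving customers can be placed \emph{behind} the deviator, who then advances toward service faster than the $D_{\nopt+1}$ accounting allows, breaking the bound.

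The necessity direction has a genuine gap. First, the witnessing sequence $\statex_0,\dots,\statex_k=\statex$ lies in the \emph{past} of $\statex$; if you start the game at $\transa(\statex)$ and let the profile prescribe reneging whenever the count exceeds $\nopt$, the system can never subsequently visit a state with $\nopt+1$ or more customers, so the witnessing path is unreachable and cannot be ``extracted'' from the future of $\transa(\statex)$. The paper resolves this by starting the game at $\statex_0$ (where $\njobs(\statex_0)=\nopt+1$), so the path $\statex_1,\dots,\statex_k,$ arrival, $\njobs(\statex)$ services is a \emph{future} event of positive probability. Second, your lower bound $\reward-\cost(\nopt+1)/\servr$ is the FCFS payoff; it does not bound the deviator's payoff under an arbitrary regime, where future arrivals may repeatedly overtake Kevin and make his waiting time strictly worse than FCFS. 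There is no a priori reason that non-maximality of a single state gives a uniform FCFS-quality guarantee. The paper instead compares Kevin to a shadow customer Yasmine under LCFS, observes Kevin's position is pointwise at most Yasmine's, and that the witnessing path produces one event where Kevin is served and Yasmine is not---yielding a \emph{strict} improvement over $D_{\njobs(\statex_0)}$, which is then combined with a continuity-in-parameters argument near the indifference locus $D_{\njobs(\statex_0)}\ngparameters=0$. Your dichotomy on which position $\labcuststar$ reneges is also unnecessary: by Lemma~\ref{le:stay-dominant}, the interior case is immediate, and the whole argument reduces to the last customer.

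In short: the sufficiency plan is roughly aligned with the paper but missing the block-by-block accounting that actually invokes~\ref{it:th:thetheorem-b}; the necessity plan is structurally wrong (future vs.\ past of the witnessing sequence) and relies on an unjustified FCFS lower bound, whereas the paper uses a coupling with LCFS plus a perturbation argument near $D_{\njobs(\statex_0)}=0$.
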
    

\citet{Has:E1985} proved that in an \MMS[1] model, if, for every state $\statex$,
condition \eqref{eq:Hassin-condition} holds, 
then the regime  is universally optimal. 
For example, in \acl{LCFSPR}, \eqref{eq:Hassin-condition} holds for every non-idle state $\statex$.
On the other hand, there exist universally optimal regimes, such as  priority-slots, that do not satisfy \eqref{eq:Hassin-condition} for every state. 
Indeed, under priority-slots, if the system is at a state with $\njobs$ customers occupying  slots $[\njobs]$, then  an arriving customer would be placed at slot $\njobs+1$, \ie at the last position in the system. 
By \cref{ex:priority-slots-2}, these states are exactly the maximal states. 

In a universally optimal regime, at any given time there is a polite customer in the system, \ie a customer who will never be served before the other existing customers in the system.
Except at maximal states, the polite customer is never served before newly arriving customers.
In these states a new polite customer may be chosen. 
As mentioned in the Introduction, the designer must choose the optimal threshold without knowing the parameters of the model. 
This decision is implemented  gradually, for $\statex\in\states_{\njobs}$, with $\njobs=1,2,\dots$, by checking whether a polite customer is willing to stay in the system at the last position $\njobs$, say.
A polite customer who accepts  position $\njobs$ will remain polite until the decision about moving the threshold to $\njobs+1$ has to be made.
This decision is  mediated by a polite customer who---depending on the regime---at threshold $\njobs+1$ may be the same as the one at threshold $\njobs$, or may be different.

We use \cref{th:thetheorem} to formalize the intuition that preemption is necessary for universal optimality.
We first formally define preemption and reassignment in the case of several servers. 
The term reassignment indicates that a customer who is served by some server is moved to another server. 
There are two types of reassignments, which play different roles in our argument. 
The first type, an upgrade, occurs when a customer is reassigned to a better server. 
Such reassignments are necessary in any socially optimal allocation of service, regardless of equilibrium and incentive concerns: Consider the situation where Chantal and Kevin are the only customers in the system, Chantal is being served by the best server and Kevin by the second best server.
If Chantal's service is completed before Kevin's, then efficiency requires that Kevin be upgraded to the better server. 
Recall that our definition of regime already assumes that better servers are always  active if there are enough customers in the system, which implies that an upgrade must occur. 

The second type, a downgrade, indicates a situation where a customer is reassigned to a worse server  upon arrival of a new customer. 
This type of reassignment is more interesting from the perspective of this paper because it is not needed to obtain an optimal service allocation, but rather, it is necessary for an optimal allocation to be achieved in equilibrium. 

Formally, for a given regime, we say that a \emph{downgrade} occurs at server $\labserv$ in state $\statex$ if $\njobs(\statex) \ge \labserv$ and $\posa(\statex)\le \labserv$. 
This means that all customers at server $\posa(\statex)$ or worse (including the customer at server $\labserv$) are being downgraded. 
We say that \emph{preemption occurs in state $\statex$} if a downgrade occurs at server $\nserv$ in state $\statex$.

\begin{corollary}
If a regime is universally optimal, then for any $\labserv\in\braces*{1,\dots,\nserv}$ there exists some state $\statex$ such that a downgrade occurs at server $\labserv$ in state $\statex$.
\end{corollary}

\begin{proof}
Fix a queueing regime and consider the state reached from the idle state after  $\labserv+1$ consecutive arrivals and then a service by some server $\labcustalt \le \labserv$ (for instance, $\labcustalt=1$). 
The new state $\statex$ has $\labserv$ customers and  
is not maximal because it is reached from a state with $\labserv+1$
customers via a service completion that does not serve the last customer.
By \cref{th:thetheorem}, if the regime is universally optimal, then $\posa(\statex)\le \labserv$, \ie a downgrade in state~$\statex$ occurs.
\end{proof}

\begin{remark}
\label{re:preemption}
A mechanism without preemption can be arbitrarily worse than the corresponding mechanism with preemption.
Consider the no-preemption variant of the priority-slots regime in an \MMS[1] model.
In this variant, an arriving customer who joins the queue is assigned to the free slot with the smallest index, but the server continues serving the customer in service and moves to the customer occupying the slot with the smallest index only at the end of the service.
 
Pick $\varepsilon>0$ arbitrarily small and, without  loss of generality, take $\servr=1$.  
Assume that $\reward$ and $\cost$ satisfy  $\reward=2\cost+\varepsilon$. 
In equilibrium, arriving customers will take slot $1$ if it is vacant, and slot $2$ if slot $1$ is occupied but $2$ is vacant. 
If slots $1$ and $2$ are both  occupied, arriving customers  will balk.
When the arrival rate $\arrivr\to\infty$, an arriving customer will typically find at least one of the first two slots occupied, so a joining customer will  remain in the system for two time units in expectation before completing service, and incur a cost of $2\cost$ for a reward $2\cost+\varepsilon$.
Thus, this customer's expected contribution to the social welfare will be $\varepsilon$. 
On the other hand, in this situation,  the social optimum threshold is $\nopt=1$, with expected contribution to the social welfare $\reward-\cost=\cost+\varepsilon$.
This implies that the price of anarchy, \ie the ratio of the optimum social welfare and the equilibrium social welfare, has no finite upper bound. 
\end{remark}  

Because the service time is exponential, the social designer does not need to use preemption to achieve the social optimum. 
Indeed, from a social perspective,  all customers are identical regardless of the time they spent in the system or being served. 
Therefore, the role of preemption is purely strategic, in the sense that it affects the customers' equilibrium behavior.

 
We now use \cref{th:thetheorem} to give two examples of new universally optimal regimes. 
In both examples, the assumption that customers observe the state is essential.

\begin{example}
\label{ex:either-1-or-n+1}   
Consider a regime in which the state is the pair $(\njobs,\maxnjobs)$, where $\maxnjobs$ is the maximum number of customers that the system has seen at any given time since the last idle period.
In this regime a newly arriving customer is immediately sent to position $1$ (\ie is immediately served), whenever $\njobs<\maxnjobs$; otherwise the customer is assigned to the last position. 
That is,
\begin{equation}
\begin{aligned}
\posa(\njobs,\maxnjobs)
&=
\begin{cases}
1 & \text{if }\njobs<\maxnjobs,\\
\njobs+1 & \text{if }\njobs=\maxnjobs,
\end{cases}\\
\transa(\njobs,\maxnjobs)
&=\begin{cases}
(\njobs+1,\maxnjobs) & \text{if }\njobs<\maxnjobs,\\
(\njobs+1,\maxnjobs+1) & \text{if }\njobs=\maxnjobs,
\end{cases}\\
\transs_{\labserv}(\njobs,\maxnjobs)
&= 
\begin{cases}
(\njobs-1,\maxnjobs) & \text{if }\njobs>1,\\ 
 
(0,0) & \text{if }\njobs=1,
\end{cases}
\\
\transr_{\labcust}(\njobs,\maxnjobs)
&= 
\begin{cases}
(\njobs-1,\maxnjobs) & \text{if }\njobs>1,\\ 
(0,0) & \text{if }\njobs=1.
\end{cases}
\end{aligned}
\end{equation}
\end{example}

In this regime, the pair $(\njobs,\maxnjobs)$ is maximal if and only if $\njobs=\maxnjobs$: 
If $\njobs<\maxnjobs$, the state is reachable from $(\maxnjobs,\maxnjobs)$ by $\maxnjobs-\njobs$ services of non-last customers; if $\njobs=\maxnjobs$, it cannot be reached from a larger state without serving the last customer, because the second coordinate records the maximum since the last idle period. 
Hence, the regime in \cref{ex:either-1-or-n+1} is universally optimal because it satisfies the conditions of \cref{th:thetheorem}.

Several deterministic or stochastic  variations of this regime can be constructed. 
For instance, consider a regime with the same state as the one in \cref{ex:either-1-or-n+1}, where a newly arriving customer is placed at random in any position smaller than $\njobs+1$, when $\njobs<\maxnjobs$, and at random in any position (including $\njobs+1$), when $\njobs=\maxnjobs$. 
This regime is also universally optimal.

Under the regime described in the next example, a customer who enters the system receives a score equal to the number of customers who were in the system before this customer entered. 
Customers in the system receive priority based on their score, with lower score customers being served first. 

\begin{example}
\label{ex:score-regime}
A state  $\statex\in\states_{\njobs}$ is given by an $\njobs$-tuple $(\score_{1},\dots,\score_{\njobs})$ with $0\le \score_{1}\le \dots\le \score_{\njobs}$ and $\score_{\labcust}\ge \labcust-1$ for every $\labcust\in [\njobs]$. 
The interpretation is that $\score_{\labcust}$ is the score of the customer who is currently in position $\labcust$. 
The condition $\score_{\labcust}\ge \labcust-1$ reflects the fact that in any group of  $\labcust$ customers at least one of them has a score at least $\labcust-1$ (because the last among them to arrive saw at least all the others).
The transitions  are given by
\begin{equation}
\begin{aligned}
\posa(\score_{1},\dots,\score_{\njobs})
&=\labcuststar+1,\\
\transa(\score_{1},\dots,\score_{\njobs})
&=(\score_{1},\dots,\score_{\labcuststar}, \njobs, \score_{\labcuststar+1},\dots,\score_{\njobs}),\\
\transs_{\labserv}(\score_{1},\dots,\score_{\njobs})
&= (\score_{1}, \dots, \score_{\labserv-1}, \score_{\labserv+1},\dots, \score_{\njobs}),\\
\transr_{\labcust}(\score_{1},\dots,\score_{\njobs})
&= (\score_{1},\dots,\score_{\labcust-1},\score_{\labcust+1},\dots,\score_{\njobs}),
\end{aligned}
\end{equation}
where $\labcuststar=\max\{\labcust\in[\njobs]:\score_{\labcust} < \njobs\}$ and the maximum is defined as $0$ if the set is empty. 
Therefore, $\labcuststar$ is the rank of the last customer who has priority over the incoming customer. 
\end{example}

According to the regime in \cref{ex:score-regime}, a customer who arrives in state $(\score_{1},\dots,\score_{\njobs})$ is placed in the last position  only if $\score_{\njobs}=\njobs-1$. 
Such a state must be maximal.
Indeed, assume by contradiction that we reached this state $\statex$ from a state with $\njobsalt>\njobs$ customers without reneging. 
When the system has $\njobsalt$ customers the score of the last customer is at least $\njobsalt-1$, and arrivals and service completions of customers who are not  last can only increase the score of the last customer. 
Therefore the score of the last customer in state $\statex$ must be at least $\njobsalt-1>\njobs-1$, a contradiction.
Hence, by \cref{th:thetheorem}, the regime is universally optimal.

\begin{remark}
\label{re:implementation}  
Although any universally optimal regime allows a planner to achieve social optimality in equilibrium without knowing the parameters of the model, customers must know them to play the equilibrium.
This situation is typical in the implementation literature, where the planner knows less than the agents.
We refer the reader to \citet{Mas:CUP1985,MasSjo:HSCW2002} for surveys on implementation.
\end{remark}

%
%

\section{Proof of the Main Result}
\label{se:proof}
%
%

\subsection{The Social Optimum}
\label{suse:preliminaries}

We first characterize the socially optimal threshold. 
For an \MMS[1] queueing system, this characterization is already given by 
\citet{Nao:E1969}; see also \citet[Chapter~2]{HasHav:Kluwer2003}. 
For an \MMS system, we use a slightly different argument, which will be useful in the proof of the main result.

We let $\servr\ \coloneqq\sum_{\labserv=1}^{\nserv}\servr_{\labserv}$, and adopt the convention that $\servr_{\labserv}=0$ for every $\labserv>\nserv$.  
In the sequel we will assume that server $\labserv$ provides service according to a Poisson process with rate $\servr_{\labserv}$, with the understanding that,  when  the number of customers in the system is smaller than $\labserv$, a service of server $\labserv$ is fictitious and does not change the state. 
When the system contains $\labserv$ customers, we call a service completion by server $\labserv$ a \emph{critical service}. 

Consider two designers, Alice and Bob. Alice uses threshold $\njobs$, whereas Bob uses threshold $\njobs-1$. 
The two systems are driven by identically coupled arrival and service processes. 
Start from a time in which Alice has $\njobs$ customers and Bob has $\njobs-1$ customers. 
Let the coupled systems run until Alice experiences a critical service. 
At that moment, Bob experiences a fictitious service. 
Then continue the coupled processes until an arrival brings Alice's system to $\njobs$ customers. 

Let  $\stime_{\njobs}^{\crit}$ be the first time at which Alice has a critical service. 

Until that time, Alice has one more customer than Bob; afterward, the two systems contain the same number of customers. 
Therefore, over this comparison cycle, Alice obtains exactly one additional service reward and incurs one additional unit of flow cost for time $\stime_{\njobs}^{\crit}$. 
Define 
\begin{equation}
\label{eq:T-n-def}  
\ruint_{\njobs} \coloneqq \Expect\bracks*{\stime_{\njobs}^{\crit}}.
\end{equation}
The expected difference between Alice's and Bob's payoff during this comparison cycle is given by
\begin{equation}
\label{eq:difference}    
\differ_{\njobs}\coloneqq \reward-\cost \ruint_{\njobs}.
\end{equation}

We now show that $\differ_{\njobs}$ is weakly decreasing in $\njobs$, and strictly decreasing if $\servr_1>\cdots>\servr_{\nserv}$.
Equivalently, we show that $\ruint_{\njobs}$ is weakly increasing in $\njobs$.

To compare $\ruint_{\njobs}$ and $\ruint_{\njobs-1}$, consider a different coupling. 
Take two systems, Alice's system, starting with $\njobs$ customers, and Bob's system, starting with $\njobs-1$ customers. 
Arrivals are coupled identically. 
Suppose that, before either system experiences a critical service event, Alice has $\njobsA$ customers and Bob has $\njobsB$ customers, with $\njobsA>\njobsB>0$. 
Service completions by servers $\labserv<\njobsB$ are coupled identically. 
Since $\servr_{\njobsB}\ge \servr_{\njobsA}$, we couple the critical service clocks so that, at rate $\servr_{\njobsA}$, both systems experience a critical service event, and, at rate $\servr_{\njobsB}-\servr_{\njobsA}$, only Bob's system experiences a critical service event. 
All remaining noncritical service completions are generated independently.

Under this coupling, Bob experiences a critical service event no later than Alice. 
Therefore $\ruint_{\njobs}\ge \ruint_{\njobs-1}$. 
If the service rates are strictly ordered, $\servr_1>\cdots>\servr_{\nserv}$, then the inequality is strict for every $\njobs>1$, as desired.
It follows that $\nopt$ is such that  $\differ_{\njobs}\ge 0$, for every $\njobs\le\nopt$, and $\differ_{\njobs}\le 0$, for every $\njobs>\nopt$. 

We now express the optimality  conditions using a different short cycle comparison. 
Let $\qlthresh^{\njobs}(\ctime)$ be the number of customers in Alice’s system at time $\ctime$, under threshold $\njobs$, starting from $\qlthresh^{\njobs}(0)=\njobs$.
We define $\stime_{\njobs}^{\return}$ as the first time after the next event, either arrival or service, at which  $\qlthresh^{\njobs}(\ctime)=\njobs$.

Then set
\begin{equation}
\label{eq:tau-tilde}
\stimetilde_{\njobs} \coloneqq \min\parens*{\stime_{\njobs}^{\crit},\stime_{\njobs}^{\return}},
\qquad
\ruinptilde_{\njobs} \coloneqq \Prob\parens*{\stime_{\njobs}^{\crit} < \stime_{\njobs}^{\return}},
\qquad
\ruinttilde_{\njobs} \coloneqq \Expect\bracks*{\stimetilde_{\njobs}},
\end{equation}
and let
\begin{equation}
\label{eq:difference-tilde}    \differtilde_{\njobs}\coloneqq \reward\ruinptilde_{\njobs}-\cost\ruinttilde_{\njobs}.
\end{equation}

The number of short cycles until the first one with a critical service has a geometric distribution with success probability $\ruinptilde_{\njobs}$.
The total expected duration until the first critical service is
\begin{equation}
\label{eq:E-tau-c}  
\Expect\bracks*{\stime_{\njobs}^{\crit}}
=\frac{\Expect\bracks*{\stimetilde_{\njobs}}}{\ruinptilde_{\njobs}}
=\frac{\ruinttilde_{\njobs}}{\ruinptilde_{\njobs}}.
\end{equation}
Therefore
\begin{equation}
\label{eq:D-n=}
\differ_{\njobs}
=\reward-\cost \ruint_{\njobs}
=\reward-\cost \frac{\ruinttilde_{\njobs}}{\ruinptilde_{\njobs}}
=\frac{\reward\ruinptilde_{\njobs}-\cost\ruinttilde_{\njobs}}{\ruinptilde_{\njobs}}
=\frac{\differtilde_{\njobs}}{\ruinptilde_{\njobs}},
\end{equation}
which implies that the two quantities $\differ_{\njobs}$ and $\differtilde_{\njobs}$ have the same sign.

As a consequence, the socially optimal threshold $\nopt$ is  also characterized by 
\begin{equation}
\label{eq:D-n-star} \differtilde_{\nopt} \ge 0 \quad\text{and}\quad\differtilde_{\nopt+1} \le 0. 
\end{equation}

\begin{remark}
\label{re:pi-T}
When $\nserv=1$, the event that Alice has a critical service event is the event that Alice's system empties. 
In this case, the values $\ruinptilde_{\njobs}$ and $\ruinttilde_{\njobs}$ are, respectively, the ruin probability and the expected time until the game is over in a gambler's ruin problem with a biased coin. 
\citet[section 2.3]{HasHav:Kluwer2003} used the explicit expressions for these quantities to calculate the optimal threshold $\nopt$. 
These expressions are not needed for our argument.
\citet[section~4.4.2]{HavOz:QMSM2018} used a similar probability-of-ruin argument to compute the social cost of deviation and to find the socially optimal threshold. 

For $\nserv>1$,  we do not have explicit expressions for these quantities, but they can be expressed---using standard Markov chain arguments---as solutions of linear equations in $\njobs$ variables (the number of states in the Markov process). 
\end{remark}

\begin{remark}
\label{re:generic}
For a generic set of parameters $\differ_{\njobs}\neq 0$ for every $\njobs$, so $\nopt$ is unique. 
Moreover, if $\servr_1>\cdots>\servr_{\nserv}$ (in particular, this holds when $\nserv=1$), then there can be at most two optimal thresholds since there can be at most one $\njobs$ with $\differ_{\njobs}=0$.
\end{remark}

%
%

\subsection{Auxiliary Lemmata}

\label{suse:auxiliary-lemmata}

The proof of \cref{th:thetheorem} uses two lemmata, which are of independent interest.

The first lemma connects the payoffs from the socially optimal strategy to the payoffs obtained by an individual customer under any regime.

\begin{lemma}
\label{le:remain}
Consider the following strategy for customer  Chantal:
she remains in the system as long as her position is at most $\njobs$, and reneges immediately if her position becomes larger than $\njobs$.
If $\differtilde_{\njobs}\ge 0$, then, under any regime and any strategy of the opponents, this strategy gives Chantal a nonnegative payoff. 
If $\differtilde_{\njobs} > 0$, then this strategy gives Chantal a strictly positive payoff. 
In addition, if $\differtilde_{\njobs}\ge 0$,  $\servr_1>\cdots>\servr_{\nserv}$, and there is a positive probability that an arriving customer is placed behind Chantal before she is served or reneges, then this strategy gives Chantal a strictly positive payoff.
\end{lemma}

To prove \cref{le:remain}, we now consider the payoff to customer Chantal, under a given regime and a fixed strategy profile of her opponents. 
The proof compares Chantal's position in the actual regime with the number of customers in the system designed by Alice, the designer from the comparison cycle.
Chantal faces a stopping problem: at every time at which a decision is made, she decides whether to stay or to renege. 
The process continues until Chantal is served or reneges.

\begin{proof}[Proof of \cref{le:remain}]
Let the time $\ctime$ fall in the short comparison cycle that defines $\ruinptilde_{\njobs}$ and $\ruinttilde_{\njobs}$, starting from $\qlthresh^{\njobs}(0)=\njobs$, and let $\qlthresh^{\njobs}(\ctime)$ be the number of customers in Alice's system at time $\ctime$. 
By definition, $\qlthresh^{\njobs}(\ctime)\le \njobs$ during the short cycle.

We now couple Chantal's system with Alice's auxiliary system as follows. 
The arrival processes are coupled identically. Reneging only occurs in Chantal's system.  
Call  $\posit(\ctime)$ Chantal's position at time $\ctime$. 
We have that $\posit(\ctime)\le \qlthresh^{\njobs}(\ctime)$ at every time $\ctime$ at which Chantal is in the system.
We now turn to coupling  service completions. 
Service completions by servers $\labserv<\posit(\ctime)$ are coupled identically. 
Chantal's service event occurs at rate $\servr_{\posit(\ctime)}$, whereas Alice's critical service event occurs at rate $\servr_{\qlthresh^{\njobs}(\ctime)}$. 
Since $\posit(\ctime)\le \qlthresh^{\njobs}(\ctime)$, we have $\servr_{\posit(\ctime)}\ge\servr_{\qlthresh^{\njobs}(\ctime)}$. 
We couple these service clocks so that, at rate $\servr_{\qlthresh^{\njobs}(\ctime)}$, Chantal is served and Alice has a critical service event, and, at rate $\servr_{\posit(\ctime)}-\servr_{\qlthresh^{\njobs}(\ctime)}$, Chantal is served but Alice does not have a critical service event. 
All remaining noncritical service completions are generated independently.

Under this coupling, Chantal is served no later than Alice has a critical service event. 
Hence, in each short cycle, the probability that Chantal is served is at least $\ruinptilde_{\njobs}$, and the expected time during which she pays the holding cost before either being served or the short cycle ends is at most $\ruinttilde_{\njobs}$. 
Therefore, her expected payoff in each short cycle is at least
\begin{equation}
\label{eq:D-tilde=}    
\reward\ruinptilde_{\njobs}-\cost\ruinttilde_{\njobs}
=
\differtilde_{\njobs}.
\end{equation}

If the service rates are strictly ordered and  there is a positive probability that an arriving customer is placed behind Chantal before she is served or reneges, then there is a positive probability that  $\posit(\ctime)<\qlthresh^{\njobs}(\ctime)$ before Chantal is served or reneges. 
On this event, there is positive probability that the strict inequality is preserved until Chantal reaches an active server. 
At that time, Chantal's service rate is strictly larger than Alice's critical service rate; therefore, with positive probability Chantal is served before Alice has a critical service event. 
Thus, the above inequality  is strict in some short cycle with positive probability.

It follows that Chantal's expected payoff under the strategy in the statement of the lemma is nonnegative if $\differtilde_{\njobs}\ge 0$, and is strictly positive if either $\differtilde_{\njobs}>0$, or the service rates are strictly ordered, $\differtilde_{\njobs}\ge 0$, and with positive probability an arriving customer is placed behind her before she is served or reneges.  
\end{proof}

The second lemma shows that, if it is socially optimal for a customer to stay in the system, then it is also rational for this customer to do so. 
The lemma also implies an extension of Naor's result to arbitrary regimes: In equilibrium, customers will only start leaving the queue when the number of customers exceeds the social optimum. 
A similar result was proved by \citet{LipSti:OR1977} under general conditions for a regime that does not allow reneging \citep[see also]
[]{HasSni:OR2020}.

\begin{lemma}
\label{le:stay-dominant}
Under every regime, for every customer whose position in the system is not larger than $\nopt$, immediate reneging is weakly dominated by staying and then following the strategy of \cref{le:remain}. 
If $\differtilde_{\nopt}\parameters> 0$, then it is strictly dominated. 
\end{lemma}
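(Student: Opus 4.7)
The plan is to deduce the lemma directly from \cref{le:remain} applied with $\njobs=\nopt\parameters$. By the characterization of the social optimum derived just above the lemma, we have $\differ_{\nopt\parameters}\ge 0$ (and the strict case corresponds to the second claim of the lemma). \Cref{le:remain} therefore guarantees that the strategy $\strat_{C}$ of remaining in the system as long as the customer's position does not exceed $\nopt\parameters$, and reneging otherwise, yields a nonnegative expected payoff under any regime and any profile of opponents' strategies, and a strictly positive payoff when $\differ_{\nopt\parameters}>0$.

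To convert this into weak dominance, I would fix any state at which the customer's (call her Chantal's) position is $\labps\le\nopt\parameters$. Reneging immediately yields payoff $0$. Staying now preserves Chantal's option to follow $\strat_{C}$ from this round onwards, and the expected payoff of doing so is at least $\differ_{\nopt\parameters}\ge 0$ by the preceding paragraph. Hence the continuation value of the action ``stay'' (optimized over future play) is at least $0$, which matches the payoff of ``renege,'' so staying is weakly preferred; and this holds uniformly in the opponents' strategies, which is exactly what weak dominance requires. When $\differ_{\nopt\parameters}>0$ the inequality becomes strict, upgrading the conclusion to strict dominance.

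The only point that deserves care, and the main (essentially only) obstacle, is that \cref{le:remain} is phrased for ``a customer at position $\njobs$,'' whereas here the starting position $\labps$ may be strictly smaller than $\nopt\parameters$. Inspecting the proof of \cref{le:remain}, however, the only property of the initial position that is used is the invariant that $\qlength_{\round}$ upper-bounds Chantal's position along the trajectory, which is ensured once we set $\qlength_{0}=\njobs$. Since this invariant is maintained \emph{a fortiori} when the initial position is already $\le\njobs$, the conclusion of \cref{le:remain} transfers verbatim to our setting, and the present lemma follows as an immediate corollary.
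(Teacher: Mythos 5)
Your argument is correct and relies on the same key lemma (\cref{le:remain}), but the paper avoids the gap you flagged by invoking the lemma differently. Rather than fixing $\njobs=\nopt\parameters$ and then re-inspecting the proof of \cref{le:remain} to cover a customer whose initial position $\labps$ is strictly smaller, the paper applies \cref{le:remain} with $\njobs=\labps$ equal to the customer's \emph{current} position---so the hypothesis ``a customer at position $\njobs$'' is met verbatim---and then uses the monotonicity $\differ_{\labps}\ge\differ_{\nopt\parameters}\ge 0$ established in the preliminaries to obtain the required sign (with strict inequality when $\differ_{\nopt\parameters}>0$). This makes the claim a direct corollary of \cref{le:remain} as stated, with no need to re-enter its proof. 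Your workaround---observing that the invariant ``Chantal's position is at most $\qlength_{\round}$'' holds a fortiori when $\qlength_0=\nopt\parameters$ exceeds her initial position---is sound, so there is no error; it simply does a bit more work than the paper's one-line appeal to monotonicity of $\differ_\njobs$.
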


\begin{proof}
A customer who reneges gets a continuation payoff equal to $0$. 
If $\njobs\le \nopt$, then $\differtilde_{\njobs}\ge 0$ by the characterization of $\nopt$. 
Therefore, the assertion follows from \cref{le:remain}. 
If $\differtilde_{\nopt}>0$, then $\differtilde_{\njobs}>0$ for every $\njobs\le\nopt$, and the strict assertion follows from \cref{le:remain}.
\end{proof}

%
%

\subsection{The Proof}
\label{suse:proof}

We now have all the tools we need to prove \cref{th:thetheorem}.

\begin{proof}[Proof of \cref{th:thetheorem}]
\noindent
\ref{it:th:thetheorem-b} $\implies$  \ref{it:th:thetheorem-a}. 
Fix an environment $\parameters$ and let $\nopt=\nopt\parameters$.
Let $\stratopt$ be the Markovian strategy profile given by $\stratopt(\statex)=[\njobs(\statex)]\setminus[\nopt]$. 
According to $\stratopt$, all customers whose position is at most $\nopt$ remain and all other customers renege. 
By definition, $\stratopt$ induces the socially optimal behavior. 
We claim that it is a Markov perfect equilibrium, \ie that starting from any state $\statex$, each customer has an incentive to follow $\stratopt$ if the other customers do. 
By \cref{le:stay-dominant}, for a customer whose position is at most $\nopt$ it is a dominant strategy to remain in the system. 
It remains to show that customers at positions above $\nopt$ should renege.  

Fix a customer, Kevin, whose position is higher than $\nopt$. 
Consider a continuation strategy for Kevin that does not renege immediately.
Assume, moreover, that Kevin plans to remain in the system whenever his position is at most $\nopt$, as staying is weakly dominant at such positions.  

Consider the successive times at which either Kevin is served or the system has $\nopt+1$ customers. 
At the first such time, if Kevin has not been served, all opponents whose position is higher than $\nopt$ renege, and therefore Kevin is in position $\nopt+1$.

Starting from a time at which the system has $\nopt+1$ customers and Kevin is in position $\nopt+1$, consider the evolution until the next time at which either Kevin is served or the system has $\nopt+1$ customers. 
If the first event is an arrival, then some opponent, possibly the arriving customer, is in a position higher than $\nopt$ and reneges immediately, while all other opponents remain. 
Thus, if Kevin remains in the system, the system again has $\nopt+1$ customers and Kevin is the last customer.

If the first event is a service completion, then until the system next has $\nopt+1$ customers or Kevin is served, the system has at most $\nopt$ customers, and, by the strategy profile of the opponents, nobody reneges. 
As long as Kevin has not been served, every state reached before the system next has $\nopt+1$ customers is reached from the previous state with $\nopt+1$ customers by a sequence of arrivals and service completions that never serves the last customer.
Therefore, every such state is not maximal. 
Hence, by \ref{it:th:thetheorem-b}, every arriving customer is placed ahead of Kevin. 
Thus, if Kevin remains in the system until the system next has $\nopt+1$ customers, then Kevin is again the last customer.

Therefore, in each such cycle, Kevin can be served only if the last customer in the system is served, and every arriving customer is placed ahead of him. 
Hence, his probability of being served during the cycle is at most $\ruinptilde_{\nopt+1}$, and the expected time during which he pays the holding cost before either being served or the cycle ends is at least $\ruinttilde_{\nopt+1}$. 
This implies that  Kevin's expected payoff during each cycle is at most
\begin{equation}
\label{eq:Kevin's-E-P}   \reward\ruinptilde_{\nopt+1}-\cost\ruinttilde_{\nopt+1}
=
\differtilde_{\nopt+1}
\le 0.
\end{equation}
It follows that Kevin's payoff under any continuation strategy that does not renege immediately is nonpositive, so his best response is to renege immediately.\footnote{Under a generic set of parameters $\differtilde_{\nopt+1}< 0$, so reneging immediately is Kevin's unique best response, proving that it is a strict Nash equilibrium.}

\noindent
\ref{it:th:thetheorem-a} $\implies$  \ref{it:th:thetheorem-b}. 
Assume that a regime $\regime$ does not satisfy \ref{it:th:thetheorem-b}. 
Then there exists a state $\statex$ that is not maximal and such that 
$\posa(\statex)$ assigns a positive probability to $\njobs(\statex)+1$.
Because $\statex$ is not maximal, there exist a sequence of states $\statex_{0},\statex_{1},\dots,\statex_{\round}=\statex$ such that, for $1\le \labcustalt \le \round$, either
$\statex_{\labcustalt}=\transa(\statex_{\labcustalt-1})$ or 
$\statex_{\labcustalt}=\transs_{\labserv}(\statex_{\labcustalt-1})$ for some $\labserv < \njobs(\statex_{\labcustalt-1})$, and $\njobs(\statex_{0}) > \njobs(\statex)$. 
We can also assume without loss of generality that $\njobs(\statex_{0}) > \njobs(\statex_{\labcustalt})$ for $1\le \labcustalt \le \round$. 
Among all states in the witnessing sequence with maximal population size, choose the last one and relabel it as $\statex_{0}$. 
Then all later states in the sequence have strictly fewer customers than $\statex_{0}$, while the sequence still ends at $\statex$ and still uses only arrivals and services that do not serve the last customer.

We will find an open set of parameters under which the social optimum is $\njobs(\statex_{0})-1$, but, in the game that starts at $\statex_{0}$, reneging immediately is not a best response for any customer.

Choose parameters $\ngparameters$ with strictly ordered service rates such that $\differ_{\njobs(\statex_{0})}\ngparameters=0$. 
Because $\differ_{\njobs}$ is strictly decreasing in $\njobs$ when the service rates are strictly ordered, we have $\differ_{\njobs(\statex_{0})-1}\ngparameters>0$. 
Since $\differ_{\njobs}$ and $\differtilde_{\njobs}$ have the same sign, it follows that $\differtilde_{\njobs(\statex_{0})}\ngparameters=0$ and $\differtilde_{\njobs(\statex_{0})-1}\ngparameters>0$. 
Pick parameters $\parameters$ that are sufficiently close to $\ngparameters$ to satisfy
$\differtilde_{\njobs(\statex_{0})-1}\parameters>0$ and $\differtilde_{\njobs(\statex_{0})}\parameters<0$. 
This condition implies that $\nopt\parameters=\njobs(\statex_{0})-1$. 

We now show that, in a game that starts at $\statex_{0}$, no customer wants to renege immediately. 
By \cref{le:stay-dominant}, it is a strictly dominant strategy for all customers except possibly the last one to remain in the system. 
It remains to show that this is the case also for the last customer. 
Call this customer Chantal.

Consider the strategy under which Chantal remains in the system for as long as her position is not larger than $\njobs(\statex_{0})$. 
We first evaluate this strategy under the parameters $\ngparameters$. 
There is a strictly positive probability that Nature generates the events that transition the system along the sequence $(\statex_{1},\dots,\statex_{\round})$, and then an arrival occurs. 
If no arriving customer was placed behind Chantal before the system reached $\statex$, then Chantal is still last at $\statex$. 
Since $\posa(\statex)$ assigns positive probability to $\njobs(\statex)+1$, the next arriving customer is placed behind Chantal with positive probability. 
Thus, with positive probability, an arriving customer is placed behind Chantal before Chantal is served or reneges. 
By \cref{le:remain}, Chantal's payoff from this strategy is strictly positive under $\ngparameters$.

Chantal's payoff under this fixed strategy is continuous in the parameters. 
Therefore, for $\parameters$ sufficiently close to $\ngparameters$, Chantal's payoff from this strategy is still strictly positive. 
Since reneging immediately gives payoff $0$, Chantal does not want to renege immediately.

Thus, under the parameters $\parameters$, the socially optimal threshold is $\njobs(\statex_{0})-1$, but in the game that starts at $\statex_{0}$ no customer wants to renege immediately. 
Therefore the socially optimal behavior cannot be induced by a Markov perfect equilibrium. 
Hence the regime is not universally optimal.
\end{proof}

\begin{remark}
\label{re:unique-optimal-regime}
We have shown that, if a regime satisfies our condition, then it admits a socially optimal equilibrium. 
According to this equilibrium, when the number of customers reaches $\nopt+1$, the last customer reneges. 
In fact, in the generic case with a unique socially optimal threshold $\nopt$, this is also the unique socially optimal equilibrium. 
Indeed, it is socially optimal for some customer to renege when the number of customers reaches $\nopt+1$, and by \cref{le:stay-dominant} the first $\nopt$ customers will remain in an equilibrium. 
Therefore the customer who reneges must be the last in the system.
\end{remark}

\section{Conclusions}
\label{se:conclusions}

Since the early work of \citet{Nao:E1969} a large literature has studied the behavior of queueing systems where customers act strategically. 
Naor showed that, in an \MMS[1] queue under an \ac{FCFS} regime, the equilibrium outcome produced by customers’ selfish rational behavior is socially inefficient. 
He suggested tolls as a way to overcome this inefficiency.
Unfortunately, tolls require an exact calibration that depends on the stochastic parameters of the queue and on the customers’ reward and waiting cost.
For the same \MMS[1] model, 
\citet{Has:E1985} emphasized the role of the queueing regime in achieving efficiency, and proposed regimes that are efficient without requiring a fine calibration.
None of the regimes that he proposed puts a newly arriving customer in the last position.
This condition by itself is sufficient to guarantee efficiency.
Other, more recently proposed, universally optimal regimes do not satisfy this sufficient condition.
To our knowledge, this is the first paper to characterize the whole class of universally optimal regimes in a more general \MMS model.  
Our characterization is based on a suitable twist of Hassin’s sufficient condition. 

An implication of our characterization is that universal optimality requires the use of preemption at least in some situations. 
If preemption is forbidden, then efficiency may be lost. 
In most real-world settings preemption cannot be implemented and is not considered socially acceptable. 
\cref{re:preemption} shows that, when  preemption is forbidden, the loss of efficiency can be severe in some instances, and the price of anarchy is unbounded, even in an \MMS[1] model.

%
%
\section{List of symbols}
\label{se:symbols}


\begin{longtable}{p{.11\textwidth} p{.85\textwidth}}

$\cost$ & cost rate\\
$\differ_{\njobs}$ & $ \reward-\cost \ruint_{\njobs}$, defined in \cref{eq:difference}\\
$\differtilde_{\njobs}$ & $\reward\ruinptilde_{\njobs}-\cost\ruinttilde_{\njobs}$, defined in~\cref{eq:difference-tilde}\\ 

$\labels$ &  $\braces*{\labcust_{1},\labcust_{2}, \dots,\labcust_{\round}} \subset [\njobs]$\\

$\njobs$ & number of customers in the system\\
$\nopt$ & optimum threshold\\
$[\njobs]$ & $\braces*{1,\dots,\njobs}$\\
$\posit(\ctime)$ & Chantal's position at time $\ctime$\\
$\reward$ & reward\\
$\regime$ & regime\\
$\ruint_{\njobs}$ & 
$\Expect\bracks*{\stime_{\njobs}^{\crit}}$\\
$\ruinttilde_{\njobs}$ & $ \Expect\bracks*{\stimetilde_{\njobs}}$, defined in \cref{eq:tau-tilde}\\
$\statex$ & state\\
$\states$ & state space\\
$\states_{\njobs}$ &  set of possible states when there are $\njobs$ customers  in the system\\
$\states_{0}$ & idle system\\
$\qlthresh^{\njobs}(\ctime)$ &  number of customers in Alice's system at time $\ctime$, under threshold $\njobs$, starting from $\qlthresh^{\njobs}(0)=\njobs$\\
$\slot$ & slot\\

$\transa(\statex)$ & new state after a new customer arrives\\
$\score_{\labcust}$ & score of player in position $\labcust$, defined in \cref{ex:score-regime}\\
$\dirac_{\labcust}$ & Dirac measure on $\labcust$\\ 
$\simplex(\setA)$ & set of probability measures on $\setA$\\
$\ruinptilde_{\njobs}$ & $\Prob\parens*{\stime_{\njobs}^{\crit} < \stime_{\njobs}^{\return}}$, defined in \cref{eq:tau-tilde}\\
$\labserv$ & generic server\\
$\arrivr$ & arrival rate\\
$\servr$ & $\sum_{\labserv=1}^{\nserv}\servr_{\labserv}$\\
$\servr_{\labserv}$ & service rate of server $\labserv$\\
$\transs_{\labserv}(\statex)$ & new state after service by server $\labserv$\\
$\posa(\statex)$ & position in the system of the arriving customer\\
$\transr_{\labcust}(\statex)$ & new state after the customer whose current position is $\labcust$ leaves the system\\
$\transr_{\labels}$ & $\transr_{\labcust_{1}} \circ \transr_{\labcust_{2}} \circ \dots \circ \transr_{\labcust_{\round}}$\\
$\strat^{\ast}$ & Markovian strategy profile\\
$\stime_{\njobs}^{\crit}$ & first time Alice has a critical service\\
$\stime_{\njobs}^{\return}$ & first time there is an event (either arrival or service) and $\qlthresh^{\njobs}(\ctime)=\njobs$\\
$
\stimetilde_{\njobs}$ & $ \min\parens*{\stime_{\njobs}^{\crit},\stime_{\njobs}^{\return}}$, defined in \cref{eq:tau-tilde}\\
\end{longtable}

\subsection*{Acknowledgments}
The authors thank Rafi Hassin, Moshe Haviv, and Benny Oz for their useful comments.
Marco Scarsini acknowledges the support of the GNAMPA project CUP\_E53C22001930001 and the MIUR PRIN grant 2022EKNE5K.

\bibliographystyle{apalike}
\bibliography{bib-strategic-queues}

\end{document}